\newlength{\tablength}
\newlength{\spacelength}
\newcommand{\tabstar}{\hspace*{\tablength}}
\newcommand{\spacestar}{\hspace*{\spacelength}}
\def\obeytabs{\catcode`\^^I=\active}
{\obeytabs\global\let^^I=\tabstar}
{\obeyspaces\global\let =\spacestar}
\newenvironment{display}{\begingroup\obeylines\obeyspaces\obeytabs}{\endgroup}
\newenvironment{prog}{\begin{display}\parskip0pt\sf}{\end{display}}
\begin{document}
\title[Wireless Capacity With Arbitrary Gain Matrix]{Wireless Capacity With Arbitrary Gain Matrix}

\author[M. Halld\'orsson]{Magn\'us M. Halld\'orsson}
\address[M. Halld\'orsson]{School of Computer Science\\
Reykjavik University\\
Reykjavik 101, Iceland}
\email{mmh@ru.is}

\author[P. Mitra]{Pradipta Mitra}
\address[P. Mitra]{School of Computer Science\\
Reykjavik University\\
Reykjavik 101, Iceland}
\email{ppmitra@gmail.com}

\begin{abstract}
Given a set of wireless links, a fundamental problem is to find the largest subset that can transmit simultaneously, within the SINR model of interference. Significant progress on this problem has been made in recent years. In this note, we study the problem in the setting where we are given a fixed set of arbitrary powers each sender must use, and an arbitrary gain matrix defining how signals fade. This variation of the problem
appears immune to most algorithmic approaches studied in the literature. Indeed it is very hard to approximate since 
it generalizes the max independent set problem.
Here, we propose a simple semi-definite programming approach to the problem that yields constant factor approximation, if the optimal solution is  strictly larger than half of the input size.
\keywords{Wireless Networks, Capacity, SINR Model, Semidefinite programming.}
\end{abstract}

\maketitle

\section{Introduction}
We consider the fundamental problem of wireless network capacity.
Given is a set
$L = \{\ell_1, \ell_2, \ldots, \ell_n\}$ of links, where
each link $\ell_v$ represents a communication request from a sender
$s_v$ to a receiver $r_v$.  We are also given,
for every $\ell_v \in L$, a transmission power $P_v > 0$. 
The powers received from senders to receivers are defined by an $n \times n$ dimensional gain matrix $G$ with positive entries. Specifically, the signal received from $s_v$ at $r_w$ is $G_{wv} \cdot P_v$. Thus an instance in this model
can be described by the tuple $(L, P, G)$ where $P$ is the vector of the power assignments $P_v$ for all $\ell_v$.

 Simultaneously communicating links interfere with each other,
following the  physical model or ``SINR model" of interference.
Due to its higher fidelity to reality~\cite{GronkMibiHoc01,MaheshwariJD08,Moscibroda2006Protocol},
this model of interference has recently gained substantial  attention in the analysis of wireless networks.
In this model, a receiver $r_v$
successfully receives a message from a sender $s_v$ if and only if the
following condition holds:
\begin{equation}
 \frac{G_{vv} \cdot P_v }{\sum_{\ell_w \in S \setminus  \{\ell_v\}}
   G_{vw} \cdot P_w  + N} \ge \beta, 
 \label{eq:sinr}
\end{equation}
where $N$ is a universal constant denoting the ambient noise, $\beta \ge 1$ denotes the minimum
SINR (signal-to-interference-noise-ratio) required for a message to be successfully received,
and $S$ is the set of concurrently scheduled links in the same \emph{slot}. We say that a link $\ell_v$ is feasible in $S$ if Eqn. \ref{eq:sinr} is satisfied for $\ell_v$.
A set $S$ is feasible if each of its link is feasible.  

Note that what we described above is the abstract SINR model.
In the more commonly studied geometric SINR model, $G_{vw}$ is a polynomial function of $d(s_w, r_v)$,
where $d(x, y)$ is the distance between two points $x$ and $y$. Our results naturally apply to that model
as well. Given that the geometric SINR model does not capture obstacles, reflections and other real life distortions, 
it is interesting to see what can be proven in the abstract model.

Our setting where the powers are given as part of the input is often called the \emph{fixed} power case,
as opposed to the \emph{power control} case where the algorithm can choose the power assignment.
So far, research on fixed power
has focused on \emph{oblivious} power assignments, where the power
of a link is a (usually simple) function of the length of the link \cite{HW09,FKRV09,KV10,SODA11}.
Recently, a constant factor approximation algorithm to find the capacity in the power control case 
has also been achieved \cite{KesselheimSoda11}. Unfortunately, none 
of these techniques appear to extend to the case of arbitrary fixed powers (for either arbitrary or geometric gain matrices). Yet, the problem of arbitrary fixed powers
is not only natural, but has practical relevance, as commercial hardware often do not 
have the capacity of choosing precise powers to implement either an arbitrary assignment \emph{\`{a} la} \cite{KesselheimSoda11}, or to implement many of the oblivious power assignments found in literature.

In this paper, we prove the following theorem.
\begin{theorem}
Assume $(L, P, G)$ is an instance of the capacity problem in the abstract SINR model, such that $|OPT| > \frac{1}{2}(1 + \epsilon) |L|$ for some $\epsilon > 0$, where $OPT$ is the maximum feasible subset of $L$ using $P$. Then there is a polynomial time randomized algorithm to find a feasible set of size $\Omega(\epsilon |L|)$, with probability $1 - o(1)$.
\label{mainth1}
\end{theorem}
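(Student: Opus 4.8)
\emph{Plan.} I would recast the capacity problem as a Boolean quadratic program, relax it to a semidefinite program whose optimum is at least $|OPT|$, and then round, using the hypothesis $|OPT| > \tfrac12(1+\epsilon)|L|$ as the ``integrality slack'' that the rounding is allowed to lose.

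\emph{Setup.} For each link $\ell_v$ put $a_{vw} = \beta G_{vw}P_w/(G_{vv}P_v)$ for $w\neq v$, $a_{vv}=0$, and $c_v = 1 - \beta N/(G_{vv}P_v)$. First discard every link with $c_v\le 0$: such a link cannot satisfy \eqref{eq:sinr} even alone, hence is not in $OPT$, and removing it only increases $|OPT|/|L|$. With a $0/1$ variable $x_v$ indicating membership in the scheduled set $S$, and using $x_v^2=x_v$, the statement ``$S$ is feasible'' is exactly $\sum_w a_{vw} x_v x_w \le c_v x_v$ for all $v$, and we want $\max\sum_v x_v$. Lifting $x$ to a Gram matrix $X\succeq 0$ with a homogenizing index $0$ (so $X_{00}=1$, $X_{vv}=X_{0v}$, $0\le X_{vw}\le\min(X_{vv},X_{ww})$, $X_{vw}\ge X_{vv}+X_{ww}-1$) gives the SDP
\[
 \max \ \sum_v X_{vv} \qquad \text{s.t.}\qquad \sum_w a_{vw} X_{vw}\le c_v X_{vv}\ \ (\forall v),\quad X\succeq 0,
\]
together with the listed $0/1$--valid inequalities. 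The rank--one point $X=\bar x\bar x^{\top}$ with $\bar x=(1,\mathbf{1}_{OPT})$ is feasible, so the optimum is at least $|OPT|>\tfrac12(1+\epsilon)|L|$; writing $x_v:=X_{vv}$ for an optimal solution this reads $\sum_v(1-x_v)<\tfrac12(1-\epsilon)|L|$.

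\emph{Rounding.} The natural first attempt is threshold rounding: draw $\tau$ uniformly from $[\tfrac12,1]$ and set $S=\{v: x_v\ge\tau\}$. Then $\mathbb{E}|S|=2\sum_{v:x_v\ge 1/2}(x_v-\tfrac12)\ge 2\big(\sum_v x_v-\tfrac{|L|}{2}\big)>\epsilon|L|$, so $S$ is already $\Omega(\epsilon|L|)$ in expectation. Moreover, for $v\neq w$ the inequalities $X_{vw}\ge x_v+x_w-1\ge 2\min(x_v,x_w)-1$ and $X_{vw}\ge 0$ give $\Pr[v,w\in S]\le X_{vw}$, whence $\mathbb{E}\big[\mathbf{1}[v\in S]\,I_v(S)\big]=\sum_w a_{vw}\Pr[v,w\in S]\le\sum_w a_{vw}X_{vw}\le c_v x_v$, where $I_v(S)=\sum_{w\in S\setminus v}a_{vw}$. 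Deleting from $S$ every link infeasible in $S$ yields a set $T$ that is genuinely feasible, since $I_v(T)\le I_v(S)$ whenever $T\subseteq S$; so it suffices to show that $\mathbb{E}|S\setminus T|$ falls short of $\mathbb{E}|S|$ by an additive $\Omega(\epsilon|L|)$, after which amplification to probability $1-o(1)$ is routine (repeat the rounding $O(\epsilon^{-1}\log|L|)$ times and keep the largest feasible set found).

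\emph{The obstacle.} The delicate step is the bound on $\mathbb{E}|S\setminus T|=\sum_v\Pr[v\in S,\ I_v(S)>c_v]$. The estimate above only yields $\mathbb{E}[I_v(S)\mid v\in S]\le\tfrac{x_v}{2x_v-1}\,c_v$, which is $\approx c_v$ for $x_v$ near $1$ but blows up as $x_v\to\tfrac12$, so a union bound over the infeasible links is not immediate for links whose SDP value sits near $\tfrac12$; and one cannot simply switch to a rounding that preserves the marginals $x_v$ and still has $\Pr[v,w\in S]\le X_{vw}$, because the SDP does not force $X_{vw}\ge x_v x_w$. I expect this to be handled by isolating the links with $x_v$ bounded away from $\tfrac12$ --- arguing they already number $\Omega(\epsilon|L|)$ and are feasible in $S$ with constant probability, so that a second independent subsampling cleans up the remaining interference at only a constant-factor loss in size --- possibly combined with strengthening the SDP by further valid inequalities. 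This is the step I would spend the most effort on.
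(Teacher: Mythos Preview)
Your SDP is essentially the paper's (the decisive constraint being $X_{vw}\ge x_v+x_w-1$), and your observation $\sum_v(x_v-\tfrac12)^+\ge\tfrac{\epsilon}{2}|L|$ is exactly the paper's Lemma. The gap is the rounding: threshold rounding couples $\{v\in S\}$ with $\{w\in S\}$, which is precisely the obstacle you flag, and the paper sidesteps it rather than fighting it.

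The fix is simpler than your proposed two-stage workaround. Do not threshold, and do not try to preserve marginals $x_v$. Write $\delta_v=(x_v-\tfrac12)^+$, set $L^+=\{v:\delta_v>0\}$, and sample each $v\in L^+$ \emph{independently} with probability $\delta_v/2$ into $R$. For $v,w\in L^+$ the valid inequality you already wrote gives $X_{vw}\ge x_v+x_w-1=\delta_v+\delta_w\ge\delta_w$, so the SINR row yields
\[
\sum_{w\in L^+\setminus\{v\}} a_{vw}\,\delta_w \;\le\; \sum_{w\ne v} a_{vw}\,X_{vw}\;\le\; c_v\, x_v \;\le\; c_v.
\]
Hence $\mathbb{E}[I_v(R)]=\sum_{w\in L^+\setminus\{v\}} a_{vw}\,\tfrac{\delta_w}{2}\le c_v/2$, and Markov gives $\Pr[I_v(R)>c_v]\le\tfrac12$. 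Because the sampling is independent, $\{v\in R\}$ is independent of $I_v(R)$, so the expected number of links that are both selected and feasible in $R$ is at least $\sum_{v\in L^+}\tfrac12\cdot\tfrac{\delta_v}{2}\ge\tfrac{\epsilon|L|}{8}$; deleting infeasible links then yields a genuinely feasible set of this expected size, and amplification is as you describe. Your concern that ``the SDP does not force $X_{vw}\ge x_vx_w$'' is correct but irrelevant once the marginal is $\delta_v/2$ rather than $x_v$: the inequality actually needed is $X_{vw}\ge\delta_w$ on $L^+\times L^+$, and that follows directly from the constraint you already included.
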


We do this by means of a semi-definite programming relaxation, which we show how to successfully round if the condition 
 $|OPT| > \frac{1}{2}(1 + \epsilon) |L|$  holds. In addition, we discuss numerical experiments we have performed. These experiments show that the algorithm appears to work quite well on random instances, even better than the guarantees of
 Thm. \ref{mainth1}.
 
 Semi-definite programming has been a staple in designing approximation algorithms for NP-hard problems ever
 since the seminal work of Goemans and Williams on the Max-CUT problem \cite{Goemans:1995:IAA:227683.227684}. 
 It is interesting to note that the discrete ``classical" problems closest to wireless capacity, namely the independent set
 problem and the graph coloring problem, have been  fruitfully studied using semi-definite programming \cite{Halperin:2002:CK-:606216.606221,Karger:1998:AGC:274787.274791}. The vertex cover problem, also relevant via its connection
 to the independent set problem,  also has
SDP-based approximation algorithms 
\cite{Halperin:2000:IAA:338219.338269,Karakostas:2009:BAR:1597036.1597045}. Given this background, one may expect some of the techniques to easily carry over to the capacity
 problem. Yet that does not appear to be the case, at least not in a straightforward manner. A study of the aforementioned
 papers reveal that the discreteness of the problem plays an important role in the bounds. For example, in 
\cite{Karger:1998:AGC:274787.274791}, the analysis proceeds by bounding the probability of vectors representing edges not being cut by a random hyperplane. Given the additive nature of the SINR model, it is not obvious
how to extend that analysis to this case. There  have also been a number of results for these problems on hypergraphs \cite{Krivelevich:2001:ACM:365411.365469,DBLP:conf/focs/Chlamtac07,DBLP:conf/approx/ChlamtacS08}. Though hypergraphs appear to be closer in spirit to the additive wireless model, they
are still different, because the effect of each node on 
any other node doesn't change in the SINR model (as opposed to in a hypergraph, where it can be different based on which edge they are in). Thus, the (sophisticated) methods on hypergraphs 
do not appear to  translate immediately to the SINR model either. Our SDP relaxation and rounding algorithms are
quite simple in contrast to some of the previously mentioned work. Whether or not advanced techniques can be extended to the SINR model remains to be seen.

\subsection{Related Work.}

Moscibroda and Wattenhofer \cite{MoWa06} were the first to
study of the \emph{scheduling complexity} of arbitrary set
of wireless links. 
Early work on approximation algorithms
 produced approximation
factors that grew with structural properties of the network \cite{moscibroda06b,MoscibrodaOW07,chafekar07}.

The first constant factor approximation algorithm was obtained for
capacity problem for uniform power in \cite{GHWW09} (see also
\cite{HW09}) in $\mathbf{R^2}$ with $\alpha > 2$.
Fangh\"anel, Kesselheim and V\"ocking \cite{FKV09} gave an algorithm
that uses at most $O(OPT + \log^2 n)$ slots for the scheduling problem
with \emph{linear} power assignment $P_v = d(s_v, r_v)^\alpha$,
that holds in general distance metrics.

Kesselheim obtained 
a $O(1)$-approximation algorithm
for the capacity problem with power control for doubling metrics \cite{KesselheimSoda11}. Around the same time,
the first constant factor algorithm for all sub-linear, length monotone power assignments
was achieved on general metrics \cite{SODA11}. Other recent studies in the SINR model
include work on topological maps \cite{stoc_topology11}, distributed algorithms for scheduling \cite{icalp11},
distributed power control \cite{damsicalp11} and auction based spectrum allocation \cite{hoeferspaa11}.



\section{SDP-based algorithm.}

First, some notation. Vectors are denoted by $\vec{x}, \vec{s_w}$ etc. The standard $2$-norm of the vector $\vec{x}$ is $\|\vec{x}\|$. The $i^{th}$
entry of $\vec{x}$ is $\vec{x}(i)$.
The inner product of vectors $\vec{x}$ and $\vec{y}$ is denoted $(\vec{x} \cdot \vec{y})$. Define $g_{vv} = P_v G_{vv} - \beta N$ and $g_{vw} = P_w G_{vw}$ for $v \neq w$. 
Note that we can assume without loss of generality that $g_{vv} \geq 0, \forall v$.
Let $OPT$ be a feasible subset of $L$
of maximum size. Note that $n = |L|$.

Consider the following program.

\begin{eqnarray*}
& & \max \sum_v (\vec{s_v} \cdot \vec{s}), \text{ subject to}\\
& & (\vec{s_v} \cdot \vec{s}) g_{vv} \geq \beta \left(\sum_{w \neq v} (\vec{s_v} \cdot \vec{s_w}) g_{vw}\right), \forall v\\
& & (\vec{s_v} \cdot \vec{s}) \geq 0, \forall v\\
& & (\vec{s_v} \cdot \vec{s_w}) \geq 0, \forall v,w \\
& & (\vec{s_v} \cdot \vec{s_w}) \geq (\vec{s_v} \cdot \vec{s}) + (\vec{s_w} \cdot \vec{s}) - 1, \forall v,w \\
& & \|\vec{s_v}\|^2 = 1, \forall v  \text{ and } \|\vec{s}\|^2 = 1 \ .
\end{eqnarray*}
where $\vec{s_v}, \vec{s} \in \mathbb{R}^{n+1}$. Each link $\ell_v$ has a vector variable $\vec{s_v}$ associated
with it. The dot product of $\vec{s_v}$ with a vector $\vec{s}$ denotes  the (fractional) extent to which $\ell_v$ is
selected in the solution. 

Since the objective function and constraints are all linear functions of vector inner products, this
problem is a SDP. Thus the program can be solved up to an additive error of $\varepsilon > 0$ in time that is polynomial in 
$n$ and $\log \varepsilon$ \cite{Vandenberghe94semidefiniteprogramming}. Since $\varepsilon$ can be made small enough to not matter, we will simply
assume that the problem can be solved exactly.

We can rotate the vectors to fix $\vec{s} = \{1, 0 \ldots 0\}$, thus the above program is equivalent to:

\begin{eqnarray}
& & \max \sum_v \vec{s_v}(1), s.t. \nonumber \\
& & \vec{s_v}(1) g_{vv} \geq \beta\left(\sum_{w \neq v} (\vec{s_v} \cdot \vec{s_w}) g_{vw}\right), \forall v \label{sdp:eqSINR}\\
& & \vec{s_v}(1) \geq 0, \forall v \label{sdp:o1}\\
& & (\vec{s_v} \cdot \vec{s_w}) \geq 0, \forall v,w \label{sdp:o2} \\
& & (\vec{s_v} \cdot \vec{s_w}) \geq \vec{s_v}(1) + \vec{s_w}(1) - 1, \forall v,w \label{sdp:eqSp}\\
& & \|\vec{s_v}\|^2 = 1, \forall v  \label{sdp:o3}\ .
\end{eqnarray}

Let us verify that this program is a relaxation of the maximum capacity problem. 
\begin{lemma}
The SDP is a relaxation of the original problem.
\end{lemma}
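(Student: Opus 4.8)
The plan is to exhibit, for any feasible subset $S \subseteq L$ in the SINR sense, a feasible solution of the SDP whose objective value equals $|S|$; since $OPT$ is such a set, this shows the SDP optimum is at least $|OPT|$. The natural embedding is the trivial ``integral'' one: work in $\mathbb{R}^{n+1}$ with $\vec{s} = (1, 0, \ldots, 0)$ fixed as in the rotated form of the program, and set $\vec{s_v} = \vec{s} = (1,0,\ldots,0)$ for every $\ell_v \in S$, and $\vec{s_v} = \vec{e_{v+1}}$ (the $(v+1)$-st standard basis vector, orthogonal to $\vec{s}$ and to all the other chosen vectors) for every $\ell_v \notin S$. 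Then $\vec{s_v}(1) = 1$ if $\ell_v \in S$ and $\vec{s_v}(1) = 0$ otherwise, so the objective $\sum_v \vec{s_v}(1)$ equals $|S|$, and $\|\vec{s_v}\|^2 = 1$ for all $v$, so \eqref{sdp:o3} holds.

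Next I would check the remaining constraints case by case. Constraint \eqref{sdp:o1} is immediate since each $\vec{s_v}(1) \in \{0,1\}$. For \eqref{sdp:o2}: $(\vec{s_v}\cdot\vec{s_w})$ is $1$ if both are in $S$, and $0$ in every other case (basis vectors are orthogonal to each other and to $\vec{s}$), so it is always nonnegative. For \eqref{sdp:eqSp}: if both $\ell_v,\ell_w \in S$ the left side is $1$ and the right side is $1+1-1=1$; if at most one is in $S$ the right side is at most $1 + 0 - 1 = 0 \le (\vec{s_v}\cdot\vec{s_w})$; so it holds in all cases. Finally, the SINR constraint \eqref{sdp:eqSINR}: for $\ell_v \notin S$ the left side is $\vec{s_v}(1) g_{vv} = 0$ and, since $(\vec{s_v}\cdot\vec{s_w}) = 0$ for all $w\neq v$ (as $\vec{s_v}$ is a basis vector orthogonal to everything else), the right side is also $0$, so the inequality holds. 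For $\ell_v \in S$, the left side is $g_{vv} = P_v G_{vv} - \beta N$ and the right side is $\beta \sum_{w \neq v,\ \ell_w \in S} g_{vw} = \beta \sum_{w\neq v,\ \ell_w\in S} P_w G_{vw}$, and \eqref{sdp:eqSINR} reduces exactly to $P_v G_{vv} - \beta N \ge \beta \sum_{\ell_w \in S\setminus\{\ell_v\}} P_w G_{vw}$, which is a rearrangement of the SINR feasibility condition \eqref{eq:sinr} for $\ell_v$ in $S$. Hence all constraints are satisfied.

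The only genuinely delicate point is the reduction of \eqref{sdp:eqSINR} to \eqref{eq:sinr}: one must note that in the sum $\sum_{w\neq v}(\vec{s_v}\cdot\vec{s_w}) g_{vw}$ only the terms with $\ell_w \in S$ survive (the rest vanish because $\vec{s_w}$ is then orthogonal to $\vec{s_v} = \vec{s}$), that $g_{vw} = P_w G_{vw}$ for $w \neq v$ by definition, and that $g_{vv} = P_v G_{vv} - \beta N \ge 0$ may be assumed without loss of generality (if $P_v G_{vv} < \beta N$ then $\ell_v$ can never be feasible even alone, so it may be discarded from $L$ up front). With $g_{vv} \ge 0$ the constraint for $\ell_v \notin S$ is the trivial $0 \ge 0$, and for $\ell_v \in S$ it is precisely the cleared-denominator form of \eqref{eq:sinr}. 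I do not expect any obstacle beyond this bookkeeping; the lemma is a routine sanity check that the SDP was written down correctly, and the argument above makes the correspondence between integral solutions and SINR-feasible sets explicit.
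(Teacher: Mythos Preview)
Your proposal is correct and matches the paper's own proof essentially line for line: the paper also sets $\vec{s_v}=\vec{s}=(1,0,\ldots,0)$ for links in the optimal set and $\vec{s_v}=\vec{e}_{v+1}$ otherwise, then verifies each constraint by the same case analysis. The only cosmetic difference is that you phrase it for an arbitrary feasible $S$ rather than specifically $OPT$, and you spell out the $g_{vv}\ge 0$ point a bit more explicitly.
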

\begin{proof}
Consider any optimal solution $OPT$ to the capacity problem.
For all $\ell_v \in OPT$, set $\vec{s_v} = \vec{s} =  \{1, 0, 0, 0 \ldots 0\}$. 
If $\ell_v \in L \setminus OPT$ 
set 
\begin{equation*} 
\vec{s_v}(i) = \left\{ 
\begin{array}{rl}
1 & \text{if } i = v + 1\\
0 & \text{otherwise}\\
\end{array} \right.
\end{equation*}
In other words, we make sure that each unselected link chooses a different position
for the single $1$ in the vector. 

Given these assignments, Equations \ref{sdp:o1}, \ref{sdp:o2} and \ref{sdp:o3} can easily seen to hold.

To show that Eqn. \ref{sdp:eqSINR} is satisfied, first assume $\ell_v \in OPT$. The following observation is immediate:
\begin{claim}
If $\ell_v, \ell_w \in OPT$ then $\vec{s_v}(1) = \vec{s_w}(1) = (\vec{s_v} \cdot \vec{s_w}) = 1$. If $\ell_v \in L \setminus OPT$
then $\vec{s_v}(1) =0$ and $(\vec{s_v} \cdot \vec{s_w}) = 0$ for any $\ell_w \neq \ell_v$.
\end{claim}

Since $\ell_v \in OPT$,
\begin{equation*}
\vec{s_v}(1) g_{vv} = g_{vv}
\end{equation*}
And,
\begin{eqnarray*}
\lefteqn{\beta\left(\sum_{w \neq v} (\vec{s_v} \cdot \vec{s_w}) g_{vw}\right) } \\
& = & \beta\left(\sum_{w \in OPT \setminus \{v\}} (\vec{s_v} \cdot \vec{s_w}) g_{vw}\right)  + \beta\left(\sum_{w \in L \setminus (OPT \cup \{v\})} (\vec{s_v} \cdot \vec{s_w}) g_{vw}\right) \\
& =&  \beta\left(\sum_{w \in OPT \setminus \{v\}} g_{vw} \right) 
\end{eqnarray*}
where the second equality follows from the claim above.

Now, since $\ell_v \in OPT$, $g_{vv} \geq \beta\left(\sum_{w \in OPT \setminus \{v\}} g_{vw}\right)$ (by Eqn \ref{eq:sinr}). Thus, the above two equations show that Eqn. \ref{sdp:eqSINR} is satisfied when $\ell_v \in OPT$. The
case where $\ell_v \not \in OPT$ is similar.

For Eqn. \ref{sdp:eqSp}, the following observations suffice:
\begin{itemize}
\item If $\ell_v, \ell_w \in OPT$, $(\vec{s_v} \cdot \vec{s_w}) = 1 = \vec{s_v}(1) + \vec{s_w}(1) - 1$  
\item If $\ell_v, \ell_w \not \in OPT$, they have $1$s in different positions and $(\vec{s_v} \cdot \vec{s_w}) = 0 \geq 0 +0 - 1$  
\item If $\ell_v \in OPT, \ell_w \not \in OPT$, they have 1s in different positions and $(\vec{s_v} \cdot \vec{s_w}) = 0 = 1 +0 - 1$  
\end{itemize}
\qed
\end{proof}


Now we present our algorithm and the proof of Thm. \ref{mainth1}. We need two related definitions.
Let $\delta_v = \max\{\vec{s_v}(1) - \frac{1}{2}, 0\}$ for all $\ell_v \in L$. Further, define $L^+ = \{\ell_v \in L: \delta_v > 0\}$. The algorithm is as follows.

\begin{algorithm}                      
\caption{Capacity1}          
\label{alg1}                           
\begin{algorithmic}[1]                    
     \STATE Solve the SDP
     \STATE Select each link $\ell_v \in L^+$ with probability $\frac{\delta_v}{2}$ in to a set $R$ 
     \STATE Output $\{\ell_v \in R : \ell_v \text{ is feasible in } R\}$
\end{algorithmic}
\end{algorithm}

\begin{lemma}
\label{sumdelta}
If $|OPT| \geq (1 + \epsilon) n/2$, then
$\sum_{\ell_v \in L^+} \delta_v \geq \frac{n \epsilon}{2}$.
\label{lemma1}
\end{lemma}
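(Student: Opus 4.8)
The plan is to exploit two facts only: that the SDP objective value of an optimal solution is at least $|OPT|$ (this is exactly the content of the preceding relaxation lemma), and that the ``hinge'' function $x \mapsto \max\{x - \tfrac12, 0\}$ dominates the linear function $x \mapsto x - \tfrac12$ pointwise.

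First I would fix an optimal SDP solution $\{\vec{s_v}\}_{v}, \vec{s}$ in the rotated form, so that $\vec{s} = \{1,0,\ldots,0\}$ and hence $\vec{s_v}(1) = (\vec{s_v}\cdot\vec{s})$ is precisely the contribution of link $\ell_v$ to the objective $\sum_v \vec{s_v}(1)$. By the relaxation lemma, the optimal SDP value is at least $|OPT|$, so under the hypothesis $|OPT| \ge (1+\epsilon)n/2$ we get $\sum_{v} \vec{s_v}(1) \ge (1+\epsilon)n/2$.

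Next, for every $\ell_v \in L$ we have $\delta_v = \max\{\vec{s_v}(1) - \tfrac12,\, 0\} \ge \vec{s_v}(1) - \tfrac12$, and $\delta_v = 0$ exactly for $\ell_v \notin L^+$, so $\sum_{\ell_v \in L^+}\delta_v = \sum_{\ell_v \in L}\delta_v$. Summing the pointwise bound over all $v \in L$ then yields $\sum_{\ell_v \in L^+}\delta_v \ge \sum_{v}\big(\vec{s_v}(1) - \tfrac12\big) = \big(\sum_{v}\vec{s_v}(1)\big) - n/2 \ge (1+\epsilon)n/2 - n/2 = \epsilon n/2$, which is the claimed inequality.

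There is essentially no obstacle: the statement is a one-line consequence of the relaxation property combined with the elementary pointwise inequality for the hinge function. The only points requiring minor care are to identify the SDP objective with $\sum_v \vec{s_v}(1)$ after the rotation fixing $\vec{s}$, and to correctly invoke the previous lemma to obtain the lower bound $\ge |OPT|$ on the optimal SDP value; no property of the feasibility constraints \eqref{sdp:eqSINR}--\eqref{sdp:o3} beyond solvability is needed for this particular lemma.
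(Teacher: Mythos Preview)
Your proposal is correct and follows essentially the same route as the paper: both invoke the relaxation lemma to get $\sum_v \vec{s_v}(1)\ge (1+\epsilon)n/2$, use the pointwise inequality $\delta_v \ge \vec{s_v}(1)-\tfrac12$ (the paper writes this as $\delta_v+\tfrac12\ge\vec{s_v}(1)$), sum over $L$, and then note $\delta_v=0$ for $\ell_v\notin L^+$. The arguments are identical up to phrasing.
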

\begin{proof}
Since $|OPT| \geq (1 + \epsilon) n/2$, it follows that $\sum_v \vec{s_v}(1) \geq (1 + \epsilon) n/2$ (since the SDP is a 
relaxation of the original problem). Now by definition of $\delta_v$, $\delta_v + \frac{1}{2} \geq \vec{s_v}(1)$.
Thus, 
\begin{eqnarray*}
& & \sum_{\ell_v \in L} \left(\frac{1}{2} + \delta_v\right) \geq (1 + \epsilon) n/2 \\
& \Rightarrow & \sum_{\ell_v \in L}  \delta_v \geq (1 + \epsilon) n/2 - |L|/2 = (1 + \epsilon) n/2 - n/2 = \frac{\epsilon n}{2}
\end{eqnarray*}

Observing that $\delta_v = 0$ for $\ell_v \not \in L^+$ completes the proof. \qed
\end{proof}

\noindent We can now prove the main Theorem.

\begin{proof}{of Thm. \ref{mainth1}}
%
%
%


 Assume  that the random binary variable $X_v$ describes whether or not $\ell_v \in L^+$ is chosen into $R$. We observe that
 $\Ex(X_v) = \frac{\delta_v}{2}$, according to the algorithm.
 
 Then for any $\ell_v$,
 \begin{eqnarray}
&&  \Ex\left(\beta\left(\sum_{w \in R \setminus \{v\}} g_{vw}\right)\right) = \Ex\left(\beta\left(\sum_{w \in L^+ \setminus \{v\}} g_{vw} X_w \right)\right) \nonumber \\
& = & \beta\left(\sum_{w \in L^+ \setminus \{v\}} g_{vw} \Ex(X_w) \right) = \beta\left(\sum_{w \in L^+ \setminus \{v\}} g_{vw} \frac{\delta_w}{2} \right) \nonumber \\ 
& \Rightarrow & \Ex\left(\beta\left(\sum_{w \in R \setminus \{v\}} g_{vw}\right)\right) = \frac{1}{2}\beta\left(\sum_{w \in L^+ \setminus \{v\}} g_{vw} \delta_w \right) \label{infeasibprob1}
 \end{eqnarray}
 
Now, by Eqn. \ref{sdp:eqSINR},
$$\vec{s_v}(1) g_{vv} \geq \beta\left(\sum_{w \neq v} (\vec{s_v} \cdot \vec{s_w}) g_{vw}\right), \forall v \in L^+$$
Since $\vec{s_v}(1) \geq \frac{1}{2}$ for $v \in L^+$ and $(\vec{s_v} \cdot \vec{s_w}) g_{vw}$ is always non-negative, we get for $\ell_v \in L^+$,
\begin{eqnarray}
g_{vv} & \geq & \beta\left(\sum_{w \in L^+ \setminus \{v\}} (\vec{s_v} \cdot \vec{s_w}) g_{vw}\right) \nonumber \\
& \geq & \beta\left(\sum_{w \in L^+ \setminus \{v\}} (\vec{s_v}(1) + \vec{s_w}(1) - 1) g_{vw}\right)
= \beta\left(\sum_{w \in L^+ \setminus \{v\}} (\delta_v+ \delta_w) g_{vw}\right) \nonumber \\
& \geq & \beta\left(\sum_{w \in L^+ \setminus \{v\}} \delta_w g_{vw}\right) \label{infeasibprob2}
\end{eqnarray}
where the second inequality follows from Eqn. \ref{sdp:eqSp}, and the first equality follows from observing that $\delta_v = \vec{s_v}(1) - \frac{1}{2}$ for $\ell_v \in L^+$.
 
 Then, for $\ell_v \in L^+$,
\begin{eqnarray}
 \Pro(\ell_v \text{ is infeasible in $R$}) && =  \Pro\left(\beta\left(\sum_{w \in R \setminus \{v\}} g_{vw}\right) > g_{vv}\right) \nonumber \\
&& \leq \frac{\Ex(\beta(\sum_{w \in R \setminus \{v\}}  g_{vw}))}{g_{vv}} \leq \frac{1}{2}
\end{eqnarray}
The first equality is the definition of infeasiblity. The first inequality is Markov's inequality. The last inequality
follows from Equations \ref{infeasibprob1} and \ref{infeasibprob2}.

Now the expected size of the output is

\begin{eqnarray*}
\Ex\left(|\{\ell_v \in R : \ell_v \text{ is feasible in } R\}|\right)
&& = \sum_{\ell_v \in L^+} \Pro(\ell_v \in R \text{ and } \ell_v \text{ is feasible in } R) \\
&& = \sum_{\ell_v \in L^+} \Pro(\ell_v \text{ is feasible in } R) \Pro(\ell_v \in R) \\
&& \geq  \sum_{\ell_v \in L^+} \frac{1}{2} \frac{\delta_v}{2} 
  = \frac{1}{4}\sum_{\ell_v \in L^+} \delta_v \geq \frac{n \epsilon}{8} \label{eqn:infeasibbound}
\end{eqnarray*}
The second equality follows from the independence of the events concerned. 
The first inequality follows from Eqn. \ref{eqn:infeasibbound}.
The last
inequality follows from Lemma \ref{lemma1}.
Thus the expected size of the feasible output is $\Omega(\epsilon n)$.
It is not difficult to boost the probability of getting a $\Omega(\epsilon n)$ size subset to complete
the proof of the theorem. \qed
\end{proof}

\section{Numerical Experiments}
We ran  simulations to test how well the algorithm does in practice. We used \texttt{CVX}, a package for specifying and solving convex programs using MATLAB \cite{cvx}. We ran it on version 7.8 of MATLAB running on a Macbook with a 2 GHz Intel Core 2 Duo Processor and 2 GB of RAM. 

We generated a number of problem instances where $n = 61$ and $|OPT| = 21, 26, 31, 36$ and $41$. The instances
were generated as follows. To generate the feasible subset a large random instance $M$ of links on the 2d plane was generated. Each sender $s_v = (s_v(x), s_v(y))$ is a random point in a $450 \times 450$ box. The receiver $r_v$ is defined by $(s_v(x) + \text{random}_v(x), s_v(y) + \text{random}_v(y))$ where $\text{random}_v(x)$ and $\text{random}_v(y)$ are sampled uniformly at random from $[-20, 20]$. We generated  corresponding gain matrices using the geometric SINR model setting $\alpha = 2.5$ (thus $G_{vw} = \frac{1}{\|s_w - r_v\|^{\alpha}}$). We used both uniform ($P_v$ is a constant) and mean power assignments ($P_v = \|s_v - r_v\|^{\alpha/2}$) to generate the gain matrix. We set the noise $N = 0$ throughout the experiments.

To generate the input instance $G$ (which is a $n \times n$ matrix), we combined a subset of $M$ with random entries.
More specifically, first we retrieved a random feasible subset $R$ of $M$ (found greedily). This defined a
$R \times R$ submatrix of $G$. The remaining entries were chosen iid randomly from $[0, \kappa]$, where
$\kappa$ was chosen large enough so that the remaining $n - |R|$ links would not contain a large feasible subset, thus $R$ would be $OPT$ for the instance.

Though computationally slow (for $n = 60$ the SDP took a few minutes to be solved), the algorithm performed extremely well. Indeed, it took some time to come up with instances where the algorithm didn't have a perfectly integral solution. If the random entries of $G$ corresponding to $L \setminus OPT$ were too large (corresponding to a large $\kappa$, meaning that $L \setminus OPT$ contained only very small subsets that were feasible) or if $OPT$ was too \emph{loosely} feasible (ie, Eqn. \ref{sdp:eqSINR} was far from being tight for most of the links), the algorithm did exceedingly well.

\begin{figure}
\begin{center}
\includegraphics[height=2.7in]{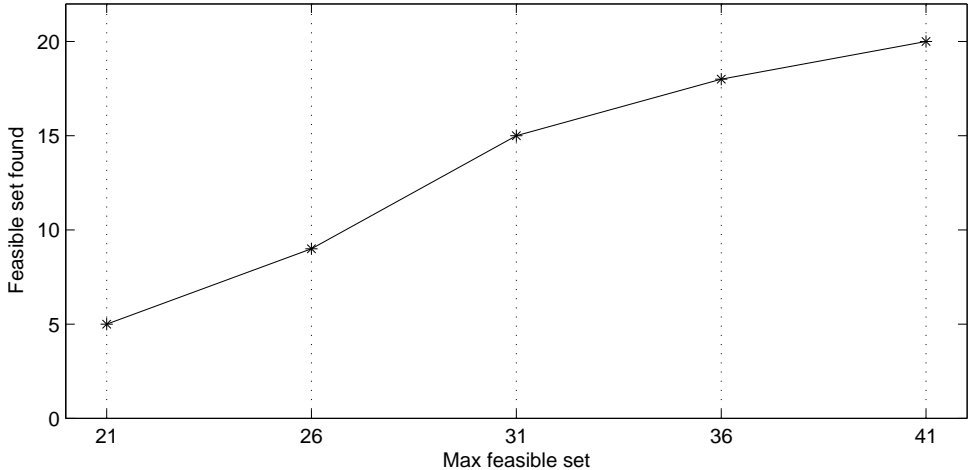}
\caption{$OPT$ vs the average size of the set found by the SDP algorithm. In each case $n = 61$.} \label{fig:uniform}
\end{center}
\end{figure}

\begin{figure}
\begin{center}
\includegraphics[height=2.7in]{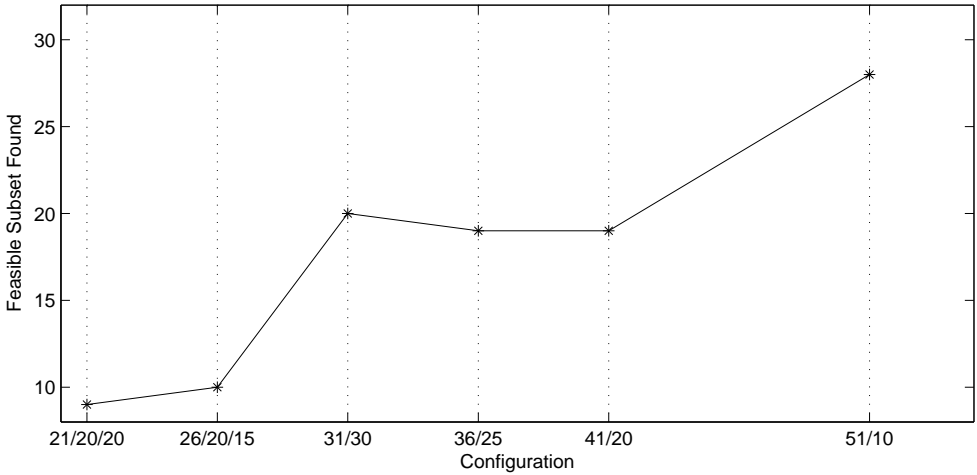}
\caption{$OPT$ vs the average size of the set found by the SDP algorithm. In each case $n = 61$ and the links in $OPT$ were generated using mean power. The labels in the $x$-axis describe the configuration of the instances. Thus, in the first case, the instance is an union of $3$ feasible sets of size 21, 20 and 20, respectively, where the latter two are copies of subsets of the first one.} \label{fig:mean}
\end{center}
\end{figure}

Even after trying to make the problem more difficult, the algorithm did quite well, only degrading when $OPT < n/2$, for which we claim no theoretical guarantee anyway, though even in these cases the output was not unsatisfactory. Indeed,
in all these cases, using the simple filtering $(\vec{s_v} \cdot \vec{s}) > 0.51$ identified $OPT$ almost exactly. Our sampling algorithm, by design cannot achieve better than a factor $2$ approximation in general, and that is almost what we achieved in all cases, as illustrated in Figure \ref{fig:uniform} for uniform power (the results for mean power were essentially identical).

As we have mentioned, in the above experiments, the algorithm sharply identified $OPT$. To create more
ambiguous instances, we also tried the following. In this setting, we took a feasible set, and added copies of subsets of
it. Thus the instance would be of the form $L_1 \cup L_2$ or $L_1 \cup L_2 \cup L_3$ where $L_1$ is feasible, and
$L_2, L_3$ are copies of subsets of $L_1$. One expects the solution to be more ``spread out" in this case, and that is exactly what we found.
The algorithm still performed rather well, even below theoretically guaranteed levels, though the behavior is somewhat different. Figure \ref{fig:mean} demonstrates the case for mean power.

\section{Conclusion}
We have shown how to use semi-definite programming to approximate the wireless capacity problem in cases where the capacity is known to be large. It is an interesting question whether or not these results can be further improved, potentially using the power of geometric SINR model. Questions about the integrality gap and hardness of the problem (apart from what is known via the fact that the problem generalizes max independent set) also deserve attention. Though we have performed some preliminary numerical experiments, the efficacy of this method both in terms of accuracy and computational efficiency also is an interesting avenue of further investigation.

\bibliographystyle{plain}
\bibliography{references}

\end{document}